\numberwithin{equation}{section}
\newtheorem{Theorem}{Theorem}[section]
\newtheorem{Corollary}[Theorem]{Corollary}
 { \theoremstyle{definition}
\newtheorem{Definition}[Theorem]{Definition}
\newtheorem{Remark}[Theorem]{Remark} }
\begin{document}
\allowdisplaybreaks

\newcommand{\arXivNumber}{2102.04207}

\renewcommand{\PaperNumber}{088}

\FirstPageHeading

\ShortArticleName{Lax Pair for a Novel Two-Dimensional Lattice}

\ArticleName{Lax Pair for a Novel Two-Dimensional Lattice}

\Author{Maria N.~KUZNETSOVA}

\AuthorNameForHeading{M.N.~Kuznetsova}

\Address{Institute of Mathematics, Ufa Federal Research Centre, Russian Academy of Sciences,\\
112 Chernyshevsky Street, Ufa 450008, Russia}
\Email{\href{mailto:mariya.n.kuznetsova@gmail.com}{mariya.n.kuznetsova@gmail.com}}

\ArticleDates{Received February 09, 2021, in final form September 15, 2021; Published online September 26, 2021}

\Abstract{In paper by I.T.~Habibullin and our joint paper the algorithm for classification of integrable equations with three independent variables was proposed. This method is based on the requirement of the existence of an infinite set of Darboux integrable reductions and on the notion of the characteristic Lie--Rinehart algebras. The method was applied for the classification of integrable cases of different subclasses of equations $u_{n,xy} = f(u_{n+1},u_n,u_{n-1}, u_{n,x},u_{n,y})$ of special forms. Under this approach the novel integrable chain was obtained. In present paper we construct Lax pair for the novel chain. To construct the Lax pair, we use the scheme suggested in papers by E.V.~Ferapontov. We also study the periodic reduction of the chain.}

\Keywords{Lax pair; two-dimensional lattice; integrable reduction; characteristic algebra; Lie--Rinehart algebra; Darboux integrable system; higher symmetry; $x$-integral}

\Classification{37K10; 37K30; 37D99}

\section{Introduction}

In a number of recent publications \cite{FHKN,H2013, HP1, HabKuzS, HP2, Kuzn} the problem of integrable classification of two-dimensional lattices
\begin{equation} \label{eq0}
u_{n,xy} = f(u_{n+1},u_n,u_{n-1}, u_{n,x},u_{n,y}), \qquad -\infty < n < \infty,
\end{equation}
was studied. Here the sought function $u_n = u_n(x,y)$ depends on the real variables $x$, $y$ and the integer variable $n$.
In these papers we proposed the method for seeking and classifying integrable equations with three independent variables based on the requirement of the existence of a set of Darboux integrable reductions and on the notion of the characteristic Lie--Rinehart algebras. The method was applied to different subclasses of equations (\ref{eq0}) of special forms.

Within this approach we use the following
\begin{Definition} \label{definition1}
A lattice of the form \eqref{eq0}
is called integrable if there exist locally analytic functions $\varphi$ and $\psi$ of two variables such that for any choice of integers $N_1$, $N_2$ the hyperbolic type system
\begin{gather}
u_{N_1,xy} = \varphi(u_{N_1+1},u_{N_1}), \nonumber \\
u_{n,xy}=f(u_{n+1}, u_{n}, u_{n-1}, u_{n,x}, u_{n,y}),\qquad N_1 < n < N_2, \label{finite_sys} \\
u_{N_2,xy}=\psi(u_{N_2},u_{N_2-1}), \nonumber
\end{gather}
obtained from lattice (\ref{eq0}) by imposing cut-off conditions at $n=N_1$ and $n=N_2$, is integrable in the sense of Darboux.
\end{Definition}
Let us recall what Darboux integrability means.
\begin{Definition}
 A function $I=I(x,\bar u,\bar u_x,\bar u_{xx},\dots)$ is called an $y$-integral if it satisfies the equation $D_yI=0$ for every solution of system \eqref{finite_sys}. A function $J=J(y,\bar u,\bar u_y,\bar u_{yy},\dots)$ is called a $x$-integral if it satisfies the equation $D_x J=0$. Integrals of the form $I=I(x)$ and $J=J(y)$ are called trivial.
\end{Definition}
Here $\bar u$ is a vector $\bar u=(u_{N_1}, u_{N_1 + 1}, \dots, u_{N_2})$, $\bar u_x$ is its derivative and so on. The operators $D_y$ and $D_x$ are operators of the total derivative with respect to the variable $y$ or $x$, correspondingly, by virtue of system~(\ref{finite_sys}).
\begin{Definition}
A system \eqref{finite_sys} is called Darboux integrable if it possesses $N_2 - N_1 + 1$ functionally independent nontrivial integrals in both characteristic directions $x$ and $y$.
\end{Definition}
Darboux integrable systems are amenable to study by the Lie--Rinehart algebras. Let $I=I(x,\bar u,\bar u_x,\bar u_{xx},\dots )$ be a nontrivial $y$-integral for the system (\ref{finite_sys}). Then $I$ must satisfy the following system:
\begin{equation*}
YI=0, \qquad X_i I = 0,
\end{equation*}
where
\begin{equation*}\label{Ydef}
X_i = \frac{\partial}{\partial u_{i,y}},\qquad Y=\sum_{i=N_1}^{N_2} \left(u_{i,y} \frac{\partial}{\partial u_i} + f_i \frac{\partial}{\partial u_{i,x}} + D_x(f_{i})\frac{\partial}{\partial u_{i,xx}} + \cdots \right)
\end{equation*}
and $f_i=f(u_{i+1},u_i,u_{i-1},u_{i,x},u_{i,y})$. The first equation follows from the fact that the operator $D_y$ acts on functions $I=I(x,\bar u,\bar u_x,\bar u_{xx},\dots)$ by the rule $D_y I = Y I$, the second one arises because~$I$ doesn't depend on variables $u_{i,y}$.

Let us consider the Lie algebra $L_y$ generated by the operators $Y$, $X_i$ over the ring $A$ of locally analytic functions of the dynamical variables $\bar u_{y},\bar u,\bar u_x,\bar u_{xx},\dots$. To the standard multiplication operation $[Z, W] = ZW - WZ$ we add two conditions: $[Z,aW]=Z(a)W+a[Z,W]$ and $(aZ)b=aZ(b)$ valid for any $Z,W\in L_y$ and $a,b\in A$. These equalities means that for any $Z \in L_y$ and any $a \in A$, the element $a Z \in L_y$. In this case the algebra $L_y$ is called the Lie--Rinehart algebra~\cite{Million, Rinehart}.

If there exists a finite basis $Z_1,Z_2,\dots,Z_k\in L_y$ such that an arbitrary element $Z\in L_y$ is represented as a linear combination
$Z=a_1Z_1+a_2Z_2+\dots +a_kZ_k$, where coefficients $a_1,a_2,\dots,a_k\in A$; and if the equality $Z = 0$ implies that $a_1 = a_2 = \cdots = a_k = 0$, then algebra $L_y$ is of a finite dimension.

The integrability criterion of the hyperbolic type system in the sense of Darboux is formulated as follows \cite{ZhiberK, ZMHSbook}:
\begin{Theorem} \label{Theorem 1}
System \eqref{finite_sys} admits a complete set of the $y$-integrals $($a complete set of the $x$-integrals$)$ if and only if its characteristic algebra $L_y$ $($respectively, characteristic algebra $L_x$$)$ is of finite dimension.
\end{Theorem}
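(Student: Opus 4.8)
The plan is to prove only the statement about $y$-integrals and $L_y$; the $x$-version follows by interchanging $x$ and $y$. The first step is to recast Darboux $y$-integrability algebraically: a locally analytic $I=I(x,\bar u,\bar u_x,\dots)$ is a $y$-integral exactly when it is annihilated by every element of $L_y$. Indeed $D_yI=YI$, while the conditions $X_iI=0$ say precisely that $I$ does not involve the variables $u_{i,y}$; and since $L_y$ is generated over $A$ by $Y$ and the $X_i$, a function killed by these generators is, by the Leibniz rule for brackets and by $A$-linearity, killed by all of $L_y$. Thus ``$I$ is a nontrivial $y$-integral'' becomes ``$I$ is a common invariant of $L_y$ which is not a function of $x$ alone'', and a complete set of $y$-integrals means $p:=N_2-N_1+1$ functionally independent such invariants.

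The second step is a finite jet-order truncation combined with the Frobenius theorem. Fix $m$ and let $\mathcal M_m$ be the finite-dimensional space with coordinates $x$, the $u_{i,x^j}$ for $N_1\le i\le N_2$, $0\le j\le m$, and the $u_{i,y}$ for $N_1\le i\le N_2$, so $\dim\mathcal M_m=1+(m+2)p$. A routine computation shows that $Y$ and each $X_i$ map functions on $\mathcal M_m$ to functions on $\mathcal M_m$; this is the only place where the equations \eqref{finite_sys} are really used, via $u_{i,xy}=f_i$, which forces $D_x^{\,j}f_i$ to have $x$-order at most $j+1$. Hence $L_y$ restricts to an involutive distribution on $\mathcal M_m$ of some generic rank $\rho_m$, and by the Frobenius theorem there are, near a generic point, exactly $1+(m+2)p-\rho_m$ functionally independent common invariants. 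Since the $X_i=\partial/\partial u_{i,y}$ lie in the algebra these invariants are automatically $\bar u_y$-free, hence genuine $y$-integrals, and since nothing in the algebra differentiates $x$ one of them may be taken to be $x$. Therefore the number of functionally independent nontrivial $y$-integrals of order $\le m$ equals $(m+2)p-\rho_m$; moreover $\rho_m$ is nondecreasing in $m$ and increases to $\dim L_y$ as $m\to\infty$.

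Given this bookkeeping, the implication ``$\dim L_y<\infty\Rightarrow$ complete set'' is immediate: if $\dim L_y=k$ then $\rho_m\le k$, so $(m+2)p-\rho_m\ge(m+2)p-k\ge p$ as soon as $(m+1)p\ge k$, and these $p$ integrals form a complete set. For the converse, suppose $w_1,\dots,w_p$ is a complete set of $y$-integrals, all of order $\le m_0$. Because $D_x$ and $D_y$ commute on solutions of \eqref{finite_sys}, each $D_x^{\,j}w_i$ is again a $y$-integral, of order $\le m_0+j$. The key point is then to show that $x$ together with the $D_x^{\,j}w_i$ ($1\le i\le p$, $0\le j\le l$) is functionally independent for every $l$ — equivalently that the count $(m+2)p-\rho_m$ increases by exactly $p$ at every step once $m\ge m_0$, i.e.\ that $\rho_m$ stabilizes. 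Granting this, level $m=m_0+l$ carries at least $p(l+1)$ functionally independent nontrivial $y$-integrals, forcing $\rho_m\le p(m_0+1)$ for all $m$, hence $\dim L_y=\sup_m\rho_m\le p(m_0+1)<\infty$.

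The main obstacle is precisely the independence/stabilization claim in the last paragraph. It is not a formal consequence of having $p$ integrals at a single jet level, and establishing it requires using the concrete structure of the operator $Y$ and of the cut-off equations at $n=N_1$ and $n=N_2$ to see that a complete set genuinely \emph{resolves} the jet variables, so that applying $D_x$ produces a new functionally independent invariant at each order and the characteristic distribution cannot keep gaining rank. This is the technical core of the theorem (see \cite{ZhiberK, ZMHSbook}); by comparison the finite-jet reduction, the Frobenius count, and the easy direction are routine.
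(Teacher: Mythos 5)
First, a point of comparison: the paper does not prove this theorem at all --- it is imported as a known criterion from \cite{ZhiberK, ZMHSbook} --- so there is no in-paper argument to measure your proposal against. Judged on its own terms, your setup is the standard one and is sound: identifying $y$-integrals with joint invariants of $L_y$, truncating to the finite-dimensional space $\mathcal M_m$, and counting invariants via Frobenius as $1+(m+2)p-\rho_m$ are all correct (modulo the routine checks you defer: that truncation is compatible with brackets because of the triangular structure of $Y$, and that the distribution has constant rank near a generic point). The direction ``$\dim L_y<\infty\Rightarrow$ complete set'' is then essentially complete: $\rho_m\le\dim L_y$ forces at least $p$ nontrivial independent invariants once $m$ is large enough.

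The converse, however, is not proved, and you say so yourself: you reduce it to the claim that $x$ together with the iterated derivatives $D_x^{\,j}w_i$ of a complete set stays functionally independent for all $j$ (equivalently, that $\rho_m$ stabilizes), and then label this ``the technical core'' and point back to \cite{ZhiberK, ZMHSbook}. That is precisely the half of the equivalence that carries the content of the theorem --- as you correctly observe, having $p$ independent integrals at a single jet level does not formally bound $\dim L_y$ --- so as written the proposal establishes only one implication. To close the gap one runs the standard normalization argument: each $D_x^{\,j}w_i$ is again a $y$-integral (since $D_x$ and $D_y$ commute on solutions and $D_x$ introduces no $\bar u_y$-dependence), a complete set can be brought to a form in which the top-order derivatives enter nondegenerately, so that $x$, the $D_x^{\,j}w_i$, and finitely many low-order jet variables give coordinates on each $\mathcal M_m$; every element of $L_y$ annihilates all the $D_x^{\,j}w_i$ and is therefore determined by its components along the finitely many remaining coordinates, which bounds its rank over $A$ uniformly in $m$. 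The nondegeneracy step (integrals of different orders must be triangularized against the jet filtration) is exactly where the concrete structure of $Y$ and of the cut-off conditions enters; until it is supplied, the ``only if'' direction remains a genuine gap.
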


\begin{Corollary} \label{corollary1}
System \eqref{finite_sys} is integrable in the sense of Darboux if both characteristic algebras~$L_x$ and~$L_y$ are of finite dimension.
\end{Corollary}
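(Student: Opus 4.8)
The plan is to derive the statement directly from Theorem~\ref{Theorem 1} together with the definition of Darboux integrability, so that the corollary becomes essentially a matter of unwinding terminology. First I would recall that, by definition, system~\eqref{finite_sys} is Darboux integrable precisely when it possesses $N_2-N_1+1$ functionally independent nontrivial integrals in each of the two characteristic directions $x$ and $y$; this count is exactly what is meant by saying the system admits a complete set of integrals in the corresponding direction.

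Next, I would invoke Theorem~\ref{Theorem 1} twice. Assuming that $L_y$ is of finite dimension, the ``if'' part of Theorem~\ref{Theorem 1} supplies a complete set of $y$-integrals for~\eqref{finite_sys}; assuming that $L_x$ is of finite dimension, the same theorem applied in the $x$-direction supplies a complete set of $x$-integrals. Putting the two conclusions side by side yields complete sets of nontrivial integrals in both characteristic directions simultaneously, which is precisely the defining property of Darboux integrability. Hence the hypothesis on the two characteristic algebras forces~\eqref{finite_sys} to be Darboux integrable.

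The only point that needs a word of care is the identification of the phrase ``complete set of $y$-integrals'' occurring in Theorem~\ref{Theorem 1} with the number $N_2-N_1+1$ appearing in the definition of Darboux integrability; I would make this correspondence explicit so that the two statements are matched verbatim. Beyond this piece of bookkeeping there is no genuine obstacle --- the corollary is an immediate consequence of the criterion. One may additionally remark that the converse implication also holds, since the ``only if'' direction of Theorem~\ref{Theorem 1} shows that Darboux integrability forces each of $L_x$ and $L_y$ to be finite-dimensional, but only the stated direction is required here.
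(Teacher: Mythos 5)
Your argument is correct and is exactly the route the paper intends: the corollary is stated as an immediate consequence of Theorem~\ref{Theorem 1} applied in each characteristic direction, combined with the definition of Darboux integrability as possessing complete sets of integrals in both directions. The paper gives no separate proof precisely because the deduction is the one you describe.
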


The above statements play a key role in our classification works. Within the scope of this paper we need one of our results:
paper \cite{HP2} provides a complete list of integrable two-dimensional lattices of the form
\begin{gather}
u_{n,xy} = \alpha(u_{n+1}, u_n, u_{n-1})u_{n,x}u_{n,y} + \beta(u_{n+1}, u_n, u_{n-1}) u_{n,x}\nonumber \\
\hphantom{u_{n,xy} =}{} +\gamma(u_{n+1}, u_n, u_{n-1}) u_{n,y} + \delta(u_{n+1}, u_n, u_{n-1}),\label{quasi}
\end{gather}
with the coefficient $\alpha$ satisfying the conditions $\frac{\partial \alpha(u_{n+1}, u_n, u_{n-1})}{\partial u_{n \pm 1}} \neq 0$. This list consists of two equations:

\begin{Theorem} \label{T1}
Integrable equation of the form \eqref{quasi} can be reduced by a point transformation to one of the following forms:
\begin{gather}
 u_{n,xy}=\alpha_nu_{n,x}u_{n,y}, \label{eq1}\\
 u_{n,xy} = \alpha_n\big(u_{n,x} - u^2_n - 1\big)\big(u_{n,y} - u^2_n - 1\big) + 2 u_n\big(u_{n,x}+u_{n,y}-u^2_n - 1\big), \label{eq3}
\end{gather}
where
\begin{equation*}
\alpha_n = \frac{1}{u_n - u_{n-1}} - \frac{1}{u_{n+1}-u_n}=\frac{u_{n+1} - 2 u_n + u_{n-1}}{(u_{n+1}-u_n)(u_n - u_{n-1})}.
\end{equation*}
\end{Theorem}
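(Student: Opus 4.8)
\textit{Proof proposal.} Since the assertion is a classification statement, the plan is to translate "integrability" into explicit constraints on the coefficients $\alpha,\beta,\gamma,\delta$, to solve those constraints, and then to kill the residual gauge freedom. By Definition~\ref{definition1} together with Corollary~\ref{corollary1}, a lattice of the form \eqref{quasi} is integrable exactly when, for every pair $N_1<N_2$, one can choose boundary functions $\varphi$, $\psi$ so that the truncated hyperbolic system \eqref{finite_sys} has both characteristic Lie--Rinehart algebras $L_x$ and $L_y$ finite-dimensional. Because \eqref{quasi} is invariant under the interchange $x\leftrightarrow y$ accompanied by $\beta\leftrightarrow\gamma$, it is enough to analyse one of the algebras, say $L_y$, in full and then impose the mirror conditions coming from $L_x$.

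First I would compute $L_y$ for the shortest nontrivial reductions, beginning with the three-point cut-off $N_2=N_1+2$ (only the interior equation \eqref{quasi} at the middle node, flanked by the two boundary equations). Writing $f_i=\alpha_i u_{i,x}u_{i,y}+\beta_i u_{i,x}+\gamma_i u_{i,y}+\delta_i$ with $\alpha_i=\alpha(u_{i+1},u_i,u_{i-1})$ and so on, one forms $X_i=\partial_{u_{i,y}}$, the first commutators $Y_i:=[X_i,Y]$ (whose leading coefficients are $\alpha_i u_{i,x}+\gamma_i$), then $[X_j,Y_i]$, $[Y_i,Y_j]$, and the higher iterated commutators. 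Requiring that this growing family of vector fields remain linearly dependent over the function ring $A$ — i.e.\ that $L_y$ be finite-dimensional — produces an overdetermined system of functional-differential equations linking $\alpha,\beta,\gamma,\delta$ and the boundary functions; here the hypothesis $\partial\alpha/\partial u_{n\pm1}\neq0$ is essential, since it licenses division by these derivatives and forces genuine constraints rather than identical vanishing. Enlarging the cut-off to $N_2-N_1=3,4,\dots$ adds further equations that propagate the interior constraints and pin down the admissible closures $\varphi$, $\psi$.

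The heart of the argument is then integrating this determining system. After separating the dependence on $u_{n-1},u_n,u_{n+1}$ and using $n$-shift compatibility, one expects the cross-dependence of $\alpha_n$ on $u_{n-1}$ and $u_{n+1}$ to be forced into a rigid form — concretely, that $\alpha_n$ reduces, up to a point change $u_n\mapsto\varphi(u_n)$, to the stated $\tfrac1{u_n-u_{n-1}}-\tfrac1{u_{n+1}-u_n}$ — while the freedom remaining in $\beta,\gamma,\delta$ collapses to a single constant. This yields precisely two branches: in one the constant vanishes, giving \eqref{eq1}; in the other it is nonzero and, after the point transformation together with rescalings $x\mapsto\xi(x)$, $y\mapsto\eta(y)$, normalizes to the quadratic right-hand side of \eqref{eq3}. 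The converse — that \eqref{eq1} and \eqref{eq3} are integrable in the sense of Definition~\ref{definition1} — is then settled by exhibiting, for each $N_1<N_2$, suitable cut-off conditions and complete sets of $x$- and $y$-integrals of the resulting systems, whose existence is confirmed via Theorem~\ref{Theorem 1}. I expect the main obstacle to be the first stage: taming the rapidly growing characteristic algebra so as to extract a \emph{closed, tractable} determining system, and in particular keeping the interior equations consistent with the two families of admissible boundary closures for all truncation lengths at once.
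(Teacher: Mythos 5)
First, a point of context: the paper you are reading does not prove Theorem~\ref{T1} at all. It is quoted as a result established elsewhere (the classification paper \cite{HP2} by Habibullin and Poptsova); the present paper only uses equation \eqref{eq3} as input for constructing a Lax pair. So there is no in-paper proof to measure your attempt against, and any honest ``proof'' here would have to reproduce the substance of \cite{HP2}.

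Measured against that standard, your proposal correctly identifies the framework — integrability via Definition~\ref{definition1} reduces to finite-dimensionality of the characteristic Lie--Rinehart algebras $L_x$, $L_y$ of the cut-off systems (Theorem~\ref{Theorem 1} and Corollary~\ref{corollary1}), one works with $X_i=\partial_{u_{i,y}}$, $Y$, and their iterated commutators, exploits the $x\leftrightarrow y$ symmetry, and uses $\partial\alpha/\partial u_{n\pm1}\neq 0$ to force nontrivial constraints. But as written it is a plan, not a proof: every step that actually produces the answer is deferred. The derivation of a closed determining system from the requirement that the commutators $[X_j,Y]$, $[X_j,[X_k,Y]]$, $[[X_j,Y],[X_k,Y]]$, \dots\ stay in a finite $A$-module; the integration of that system; the proof that $\alpha_n$ is forced, up to a point transformation, into the form $\frac{1}{u_n-u_{n-1}}-\frac{1}{u_{n+1}-u_n}$; and the proof that the residual freedom in $\beta,\gamma,\delta$ collapses to exactly the two branches \eqref{eq1} and \eqref{eq3} (rather than, say, a one-parameter family or additional sporadic cases) — all of this is replaced by ``one expects'' and ``I expect the main obstacle to be.'' The converse direction (that \eqref{eq1} and \eqref{eq3} really are integrable, i.e.\ that admissible cut-offs $\varphi$, $\psi$ exist and yield Darboux integrable systems for every $N_1<N_2$) is likewise only announced. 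Since the entire mathematical content of the theorem lies precisely in these omitted computations, the proposal cannot be accepted as a proof; it is an accurate description of the method of \cite{HP2}, to which the reader must in any case be referred.
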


Equation \eqref{eq1} was found before in papers \cite{Fer-TMF, ShY} by Ferapontov and Shabat and Yamilov.
Equation \eqref{eq3} appeared in \cite{HP2} as a result of the classification procedure.

The aim of the paper is to find Lax pair for novel chain \eqref{eq3}, to explain the method of finding Lax pairs and to prove that periodic closings of the chain possesses higher symmetries.

The Lax pair for equation \eqref{eq1}
\begin{equation*} 
\psi_{n, x} = \frac{u_{n, x}}{u_{n+1} - u_n} (\psi_{n+1} - \psi_n), \qquad \psi_{n, y} = \frac{u_{n, y}}{u_{n} - u_{n-1}} (\psi_{n} - \psi_{n-1})
\end{equation*}
was found by E.V.~Ferapontov. To construct Lax pair for chain \eqref{eq3}, we use the scheme suggested in paper~\cite{FerapNovR}. Let us describe the procedure in detailed. First of all, we represent lattice (\ref{eq3}) in the equivalent following form:
\begin{equation}
u_{xy}=\big(u_x-u^2-1\big)\big(u_y-u^2-1\big)\frac{\triangle_{z\bar z}u}{\triangle_z u \triangle_{\bar z}u}+2u\big(u_x+u_y-u^2-1\big).
\label{H2}
\end{equation}
Here $\triangle_{ z}=\frac{T_z-1}{\epsilon}$, $\triangle_{\bar z}=\frac{1-T_{\bar z}}{\epsilon}$ are the forward/backward discrete derivatives and $\triangle_{z \bar z}=\frac{T_z+T_{\bar z}-2}{\epsilon^2}$ is the symmetrised second-order discrete derivative; the operators $T_z$, $T_{\bar z}$ are the forward and backward $\epsilon$-shifts operators in the variable $z$.

The method consists of three steps:
\begin{enumerate}\itemsep=0pt
\item[1)] First we construct the dispersionless limit of the equation (obtained as $\epsilon \rightarrow 0$).

\item[2)] Secondly, for the equation found at the previous step we find dispersionless Lax pair. Usually this problem is effectively solved.

\item[3)] Finally, we reconstruct Lax pair by appropriate ``quantization'' of dispersionless Lax pair as proposed in~\cite{Zakharov}.
\end{enumerate}

The paper is organized as follows. In Section~\ref{section2} Lax pair for chain \eqref{eq3} is constructed. Section~\ref{section3} is devoted to periodic closings. Namely, we impose the periodic closure conditions $u_{n+2} = u_n$ to infinite chains~\eqref{eq1},~\eqref{eq3} and obtain finite systems. Lax pairs and higher symmetries of the second order are constructed for obtained finite systems. Conclusion contains a~discussion of the results.

\section{Construction Lax pair for equation (\ref{eq3})}\label{section2}

The main result of this section is as follows:
\begin{Theorem}\label{theorem2.1}
Equation \eqref{eq3} possesses the Lax pair
\begin{gather*}
\psi_{n, x} = \frac{u_{n,x} - u^2_n-1}{u_{n+1} - u_n} (\psi_{n+1} - \psi_n) + u_n \psi_n,\\
\psi_{n, y} = \frac{u_{n, y} - u^2_n-1}{u_n - u_{n-1}} (\psi_n - \psi_{n-1}) + u_n \psi_n.
\end{gather*}
\end{Theorem}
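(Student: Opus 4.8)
The plan is to verify directly that the given pair is compatible on solutions of \eqref{eq3}, i.e.\ to check that $D_y\psi_{n,x} = D_x\psi_{n,y}$ modulo the chain. Write
\[
\psi_{n,x} = A_n(\psi_{n+1}-\psi_n) + u_n\psi_n, \qquad
\psi_{n,y} = B_n(\psi_n-\psi_{n-1}) + u_n\psi_n,
\]
with $A_n = \dfrac{u_{n,x}-u_n^2-1}{u_{n+1}-u_n}$ and $B_n = \dfrac{u_{n,y}-u_n^2-1}{u_n-u_{n-1}}$. Applying $D_y$ to the first equation uses $D_y\psi_n = \psi_{n,y}$, $D_y\psi_{n+1} = \psi_{n+1,y}$ (for which one needs the shifted formula with $B_{n+1}$), and $D_y u_{n,x} = u_{n,xy}$, which is exactly where \eqref{eq3} enters. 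Symmetrically, applying $D_x$ to the second equation brings in $A_{n-1}$ and $u_{n,xy}$. The compatibility condition is then a relation that must hold identically in the jet variables $\psi_{n-1},\psi_n,\psi_{n+1},\psi_{n+2}$ and $u_{n-1},u_n,u_{n+1}$ and their $x$- and $y$-derivatives.

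First I would expand both mixed derivatives. From $D_yA_n$ one gets a term $\dfrac{u_{n,xy}-2u_nu_{n,y}}{u_{n+1}-u_n}$ plus $-A_n\dfrac{u_{n+1,y}-u_{n,y}}{u_{n+1}-u_n}$; substituting $u_{n,xy}$ from \eqref{eq3} and using that $\alpha_n(u_{n+1}-u_n) = -1 + \dfrac{u_{n+1}-u_n}{u_n-u_{n-1}}$ should collapse the $\alpha_n$-term into something expressible through $A_n$, $B_n$ and $A_{n-1}$. The key algebraic identities are $\alpha_n(u_{n+1}-u_n) = \dfrac{u_{n+1}-u_n}{u_n-u_{n-1}}-1$ and $\alpha_n(u_n-u_{n-1}) = 1-\dfrac{u_n-u_{n-1}}{u_{n+1}-u_n}$, which let the product $\alpha_n(u_{n,x}-u_n^2-1)(u_{n,y}-u_n^2-1)$ be rewritten as $B_n(u_{n,x}-u_n^2-1) - A_n(u_{n,y}-u_n^2-1)$ — note $B_n(u_{n,x}-u_n^2-1) = A_n\cdot B_n(u_{n+1}-u_n) \cdot \dfrac{1}{\,\cdots}$, so the bookkeeping is essentially turning everything into the quantities $A_n(u_{n+1}-u_n)$, $B_n(u_n-u_{n-1})$ and cross terms. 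After this substitution I would collect the coefficient of each monomial $\psi_{n+2}$, $\psi_{n+1}$, $\psi_n$, $\psi_{n-1}$ separately and check each vanishes.

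The $\psi_{n+2}$ coefficient involves only $A_nB_{n+1}$ from one side and nothing from the other, so its vanishing forces a constraint — but in fact $D_y\psi_{n+1}$ contributes $A_nB_{n+1}(\psi_{n+2}-\psi_{n+1})$ while $D_x\psi_{n,y}$ has no $\psi_{n+2}$, so one must see that the genuine $\psi_{n+2}$-dependence cancels; I expect this to work because $B_{n+1}$ multiplied by $A_n$ reorganizes against the shift structure, and similarly $\psi_{n-1}$ on the other side. The remaining coefficients of $\psi_{n+1}$, $\psi_n$, $\psi_{n-1}$ each reduce, after using the two $\alpha_n$-identities above and the explicit forms of $A_n$, $B_n$, to purely rational identities in $u_{n-1},u_n,u_{n+1}$ and the first derivatives, with all second derivatives eliminated by \eqref{eq3}. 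I anticipate the main obstacle is precisely the organization of the $x$- and $y$-derivative terms $u_{n,x}$, $u_{n,y}$ and their shifts: the $2u_n(u_{n,x}+u_{n,y}-u_n^2-1)$ piece of \eqref{eq3} and the $u_n\psi_n$ terms in the Lax pair interact, and one has to check that the extra linear-in-$u_n$ contributions (e.g.\ $u_nB_n(\psi_n-\psi_{n-1})$ versus the $-2u_nu_{n,y}/(u_{n+1}-u_n)$ term) balance. Once the cross terms are grouped correctly, each monomial's coefficient should vanish identically, completing the verification; no deeper structural argument is needed beyond the determinantal/telescoping nature of $\alpha_n$.
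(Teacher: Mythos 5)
Your strategy---verifying $D_y\psi_{n,x}=D_x\psi_{n,y}$ directly modulo the lattice---is a legitimate and complete way to prove the theorem, and it is genuinely different from the proof in the paper, which \emph{derives} the Lax pair by passing to the dispersionless limit \eqref{H2disp}, solving an overdetermined system for a dispersionless Lax pair, and then ``quantizing'' the result, with the compatibility check only asserted at the very end. Your key identity is also the right one: writing $A_n=\frac{u_{n,x}-u_n^2-1}{u_{n+1}-u_n}$ and $B_n=\frac{u_{n,y}-u_n^2-1}{u_n-u_{n-1}}$, one has
\[
\alpha_n\big(u_{n,x}-u_n^2-1\big)\big(u_{n,y}-u_n^2-1\big)=B_n\big(u_{n,x}-u_n^2-1\big)-A_n\big(u_{n,y}-u_n^2-1\big),
\]
and this is exactly what converts the coefficient of $\psi_{n+1}$ (respectively $\psi_{n-1}$) in the compatibility condition into equation \eqref{eq3}. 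What direct verification buys is rigor and brevity; what the paper's route buys is an explanation of where the Lax pair comes from.

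However, one structural claim in your sketch is wrong and would have misled you had you executed the plan as described: there is no $\psi_{n+2}$ in the compatibility condition. Since the $y$-part of the Lax pair is a \emph{backward} difference, the shift gives $\psi_{n+1,y}=B_{n+1}(\psi_{n+1}-\psi_n)+u_{n+1}\psi_{n+1}$, not $B_{n+1}(\psi_{n+2}-\psi_{n+1})+\cdots$; likewise $\psi_{n-1,x}$ involves only $\psi_n$ and $\psi_{n-1}$. The forward/backward asymmetry of the two halves of the Lax pair is precisely what confines the compatibility condition to the stencil $\psi_{n-1},\psi_n,\psi_{n+1}$, so there is no cancellation of $\psi_{n+2}$ to ``expect''---the term never appears. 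With that corrected, the computation does close: the coefficient of $\psi_{n+1}$ gives $D_yA_n=A_n(B_n-B_{n+1}+u_n-u_{n+1})$, which upon substituting $u_{n,xy}$ is equivalent to \eqref{eq3}; the coefficient of $\psi_{n-1}$ gives the mirror relation for $D_xB_n$; and the coefficient of $\psi_n$ then reduces, using these two, to $A_n(u_{n+1}-u_n)+u_{n,y}=B_n(u_n-u_{n-1})+u_{n,x}$, which holds identically. As written your proposal remains a plan (several steps are left at the level of ``I expect this to work''), so the bookkeeping above still needs to be carried out before it is a proof.
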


\begin{proof}
The dispersionless limit of the equation (\ref{H2}) coincides with equation:
\begin{equation} \label{H2disp}
u_{xy}=\big(u_x-u^2-1\big)\big(u_y-u^2-1\big)\frac{u_{zz}}{u^2_z}+2u\big(u_x+u_y-u^2-1\big).
\end{equation}
There exists a direct method for finding Lax pairs for equations of this form. Lax pair is sought in the following form:
\begin{gather}
 S_x = F (u, u_x, u_y, u_z, S_z), \label{Lax2_eq1}\\
 S_y = G(u, u_x, u_y, u_z, S_z).	\label{Lax2_eq2}
\end{gather}
The compatibility condition $S_{xy} = S_{yx}$ of system \eqref{Lax2_eq1}, \eqref{Lax2_eq2} by virtue of equation (\ref{H2disp}) leads to the overdetermined equation
\begin{gather*}
F_{u_y} u_{yy} u^2_z\! - G_{u_x} u_{xx} u^2_z\!- ( G_{S_z}F_{u_x} - G_{u_x} F_{S_z}+ G_{u_z} )u_{zx}u^2_z\!
- (G_{u_y} F_{S_z} - G_{S_z}F_{u_x} - F_{u_z} ) u_{zy} u^2_z \nonumber\\
\qquad{} +u_{zz} \big( \big(u^2 -u_y + 1\big)\big(u^2 - u_x + 1\big) (F_{u_x} - G_{u_y} ) -u^2_z ( G_{S_z}F_{u_z} - G_{u_z}F_{S_z} ) \big) \nonumber\\
\qquad{} - u^2_z \big( 2u\big(1 + u^2 -u_x -u_y\big) ( F_{u_x} - G_{u_y} ) + u_z ( G_{S_z}F_{u} - G_{u}F_{S_z} ) + u_x G_{u} - u_yF_{u}\big) = 0. 
\end{gather*}
Because of the fact that variables $u$, $u_x$, $u_y$, $u_z$, $u_{xx}$, $u_{yy}$, $u_{zx}$, $u_{zy}$, $u_{zz}$ are independent, this equation splits down into the overdetermined system of equations:
\begin{gather}
F_{u_y} = 0, \qquad G_{u_x} = 0, \label{eqs3_5}\\
G_{S_z}F_{u_x} - G_{u_x} F_{S_z}+ G_{u_z} = 0, \label{eq3_6}\\
G_{u_y} F_{S_z} - G_{S_z}F_{u_x} - F_{u_z} = 0,	\label{eq3_7}\\
\big(u^2 -u_y + 1\big)\big(u^2 - u_x + 1\big) (F_{u_x} - G_{u_y} ) -u^2_z ( G_{S_z}F_{u_z} - G_{u_z}F_{S_z} ) = 0,	\label{eq3_8}\\
 2u\big(1 + u^2 -u_x -u_y\big) ( F_{u_x} - G_{u_y} )
+ u_z ( G_{S_z}F_{u} - G_{u}F_{S_z} ) + u_x G_{u} - u_yF_{u} = 0.	\label{eq3_9}
\end{gather}
Equations (\ref{eqs3_5}) mean that $F = F(u, u_x, u_z, S_z)$ and $G = G(u, u_y, u_z, S_z)$. Substituting~$F$ and~$G$ into (\ref{eq3_6}), (\ref{eq3_7}), we arrive at the equations:
\begin{equation}
G_{u_z} + G_{S_z}F_{u_x} = 0, \qquad
F_{u_z} + G_{u_y} F_{S_z} = 0. \label{Eqs3_10}
\end{equation}
We differentiate the first equation (\ref{Eqs3_10}) by $u_x$, the second equation (\ref{Eqs3_10}) -- by $u_y$, and obtain that $G_{S_z}F_{u_x u_x} = 0$, $F_{S_z} G_{u_y u_y} = 0$. Obviously that the functions $F$ and $G$ take the following forms:
\begin{gather*}
 F (u, u_x, u_z, S_z) = F_2 (u, u_z, S_z) u_x + F_3(u, u_z, S_z), \\
 G(u, u_y, u_z, S_z) = F_4(u, u_z, S_z) u_y + F_5 (u, u_z, S_z).
\end{gather*}
Then we rewrite (\ref{Eqs3_10}) and (\ref{eq3_8}), (\ref{eq3_9}) using the last formulas. Because of the fact that the variables $u$, $u_x$, $u_y$, $u_z$ are independent, obtained equations split down one more time. Thus we arrive at the system for unknown functions $F_i(u, u_z, S_z)$, $i=2,3,4,5$:
\begin{gather} \label{eq3_13}
F_2 F_{4,S_z} + F_{4,u_z} = 0, \qquad F_4 F_{2,S_z} + F_{2,u_z} = 0,
\\	
F_2 - F_4 + u^2_z ( F_{2,S_z} F_{4,u_z} - F_{4,S_z}F_{2,u_z} ) = 0, \label{eq3_15} \\
F_{4,u} - F_{2,u} + u_z ( F_{2,u} F_{4,S_z}- F_{4,u} F_{2,S_z} ) = 0, \label{eq3_16}\\
F_4 F_{3,S_z} + F_{3,u_z} = 0, \label{eq3_17}\\
\big(1 + u^2\big)(F_4 - F_2) +u^2_z ( F_{3,S_z} F_{4,u_z} - F_{4,S_z}F_{3,u_z} ) =0, \label{eq3_18}\\
2u(F_4 - F_2) +u_z ( F_{3,u} F_{4,S_z}- F_{4,u} F_{3,S_z} )- F_{3,u} = 0, 	\label{eq3_19}\\
F_2 F_{5,S_z} + F_{5,u_z} = 0, \label{eq3_20}\\
2u( F_4 - F_2) + u_z ( F_{2,u} F_{5,S_z} - F_{5,u} F_{2,S_z} ) + F_{5,u} = 0, \label{eq3_21}\\
\big(u^2 + 1\big) (F_4 - F_2) + u^2_z ( F_{2,S_z} F_{5,u_z} - F_{5,S_z} F_{2,u_z} ) = 0, \label{eq3_22}\\
\big(u^2 + 1\big)^2 (F_2 - F_4) + u^2_z ( F_{3,S_z} F_{5,u_z} - F_{5,S_z} F_{3,u_z} ) = 0, \label{eq3_23}\\
2u\big(u^2+1\big) (F_2 - F_4) + u_z ( F_{3,S_z} F_{5,u} - F_{5,S_z} F_{3,u} ) = 0. \label{eq3_24}
\end{gather}
Now we will work with equations (\ref{eq3_13})--(\ref{eq3_16}) to clarify functions $F_2$, $F_4$. Let us express $F_{4,u_z}$, $F_{2,u_z}$ from (\ref{eq3_13}) and substitute them into (\ref{eq3_16}). This leads to the condition $F_4 = F_2$ or to the equation
\begin{equation} \label{eq3_25}
\big( 1 - u^2_z F_{2,S_z}F_{4,S_z}\big) = 0.
\end{equation}
Let us consider case (\ref{eq3_25}). We look for $F_2$, $F_4$ in the following form:
\begin{equation*} 
F_2 (u, u_z, S_z) = \frac{A(u, S_z)}{u_z}, \qquad F_4 (u, u_z, S_z) = \frac{B(u, S_z)}{u_z}.
\end{equation*}
Then $A$, $B$ have to satisfy the system obtained using (\ref{eq3_25}), (\ref{eq3_13}), and (\ref{eq3_15}),
\begin{gather}
1 - A_{S_z} B_{S_z} = 0, \qquad -A + B A_{S_z} = 0, \qquad -B + A B_{S_z} = 0, \label{eq3_27}\\
B_{u} - A_{u}+ B_{S_z} A_{u} - A_{S_z} B_{u} = 0. \label{eq3_30}
\end{gather}
This system has the solution:
\begin{equation*}
A (u, S_z) =\frac{ {\rm e}^{a_1(u)S_z + a_1(u) a_2(u)} - 1}{a_1(u)}.
\end{equation*}
Here $a_1$, $a_2$ are arbitrary functions. Similarly, we find that
\begin{equation*}
B (u, S_z) =\frac{ {\rm e}^{a_4(u)S_z + a_4(u) a_3(u)} - 1}{a_4(u)}
\end{equation*}
with arbitrary functions $a_3$, $a_4$. Under obtained $A$ and $B$ the first equation (\ref{eq3_27}) becomes
\begin{equation*}
1 - {\rm e}^{(a_1(u)+a_4(u))S_z + a_2(u)a_1(u)+a_3(u)a_4(u)} = 0.
\end{equation*}
Thus one can derive that $a_4 = -a_1$, $a_3 = a_2$. Finally, equation (\ref{eq3_30}) takes the form
\begin{gather*}
\big( {-} a_1(u) a'_1(u) S_z - a^2_1(u) a'_2(u) - a_1(u) a_2(u) a'_1(u) + 2 a'_1(u) \big) {\rm e}^{a_1(u)(S_z+a_2(u))} \\
{}+\big( a^2_1(u)a'_2(u) + a_1(u) a_2(u) a'_1(u) + 2 a'_1(u) + a_1(u)a'_1(u)S_z \big) {\rm e}^{-a_1(u)(S_z+a_2(u))} - 4 a'_1(u) = 0.
\end{gather*}
We assume essential dependence on $S_z$ for functions $F_2$, $F_4$ and, therefore, for $A$, $B$, so the functions ${\rm e}^{a_1(u)S_z}$, ${\rm e}^{-a_1(u)S_z}$, ${\rm e}^{a_1(u)S_z}S_z$, ${\rm e}^{-a_1(u)S_z}S_z$ are independent. Hence we have $a_1(u) = c_1$, $a_2(u) = c_2$, where $c_1$, $c_2$ are arbitrary constants.

Thus, we have clarified the right hand sides of Lax pair (\ref{Lax2_eq1}), (\ref{Lax2_eq2})
\begin{gather*}
S_x = F(u, u_x, u_y, u_z, S_z) = \frac{\big({\rm e}^{c_1 (S_z + c_2)} - 1\big)u_x}{c_1 u_z} + F_3(u, u_z, S_z),\\
S_y = G(u, u_x, u_y, u_z, S_z) = -\frac{\big({\rm e}^{-c_1 (S_z + c_2)} - 1\big)u_y}{c_1 u_z} + F_5(u, u_z, S_z).
\end{gather*}
By the shift transformation $S \rightarrow S - c_2 z$ and by the scaling $z \rightarrow c_1 z$ these equations can be reduced to
\begin{gather*}
S_x = F(u, u_x, u_y, u_z, S_z) = \frac{\big({\rm e}^{ S_z } - 1\big)u_x}{u_z} + F_3(u, u_z, S_z),\\
S_y = G(u, u_x, u_y, u_z, S_z) = -\frac{\big({\rm e}^{-S_z } - 1\big)u_y}{ u_z} + F_5(u, u_z, S_z).
\end{gather*}
To clarify $F_3$, we substitute the above functions into (\ref{eq3_17}), (\ref{eq3_18}), and (\ref{eq3_19})
\begin{gather*}
\big({\rm e}^{-S_z} - 1\big) u_z F_{3,u} - 2u\big( {\rm e}^{S_z} + 2\big) = 0,
\\
 u_z \big({\rm e}^{-S_z} - 1 - u_z {\rm e}^{-S_z}\big)F_{3,S_z} - \big(u^2 + 1\big) \big({\rm e}^{S_z} + {\rm e}^{-S_z} - 2\big) = 0,
\\
-({\rm e}^{-S_z} - 1) F_{3,S_z} + u_z F_{3,u_z} = 0.
\end{gather*}
This system has the solution
\[
F_3(u, u_z, S_z) = -\frac{\big({\rm e}^{S_z} - 1\big)\big(u^2+1\big)}{u_z}.
\] Now we rewrite equations (\ref{eq3_20})--(\ref{eq3_24}) and we obtain the system on the unknown function $F_5$:
\begin{gather*}
 \big({\rm e}^{S_z} - 1\big) F_{5,S_z} + u_z F_{5,u_z} = 0,\\
 u_z \big({\rm e}^{S_z} - 1\big) F_{5,S_z} + u^2_z {\rm e}^{S_z} F_{5,u_z} - \big(u^2+1\big)\big({\rm e}^{S_z} + {\rm e}^{-S_z} - 2\big) = 0,\\
 -u_z(-{\rm e}^{-2 S_z} + 3 {\rm e}^{-S_z} - 3 + {\rm e}^{S_z}) F_{5,S_z} - u^2_z({\rm e}^{S_z} + {\rm e}^{-S_z} - 2) F_{5,u_z} \nonumber \\
 \qquad{} + \big(u^2+1\big) \big({-}4 + {\rm e}^{S_z} - 4 {\rm e}^{-2 S_z} + 6 {\rm e}^{-S_z} + {\rm e}^{-3 S_z}\big) = 0,\\
 u_z\big(1- {\rm e}^{S_z}\big) F_{5,u} -2u\big( {\rm e}^{S_z} + {\rm e}^{-S_z} - 2\big) = 0,\\
 2 u u_z\big({\rm e}^{-2 S_z} - 3 {\rm e}^{-S_z} + 3 - {\rm e}^{S_z}\big) F_{5,S_z} + \big(u^2+1\big) u_z \big({\rm e}^{S_z} + {\rm e}^{-S_z} - 2\big) F_{5,u} \nonumber \\
 \qquad{}+ 2u\big(u^2+1\big)\big({\rm e}^{-3 S_z} + 6 {\rm e}^{-S_z} - 4 {\rm e}^{-2 S_z} + {\rm e}^{S_z} - 4\big) = 0.
\end{gather*}
This system possesses the solution
\[
F_5 (u, u_z, S_z) = -\frac{\big(1 - {\rm e}^{-S_z}\big)\big(u^2+1\big)}{u_z}.
\]

Thus we have found the Lax pair
\begin{gather}
 S_x = \frac{u_x - u^2 - 1}{u_z} \big({\rm e}^{S_z} - 1\big) + \frac{1}{u_z},	\label{Lax2_eq11}\\
 S_y = \frac{u_y - u^2 - 1}{u_z} \big(1 - {\rm e}^{-S_z}\big) - \frac{1}{u_z} \label{Lax2_eq12}
\end{gather}
for equation (\ref{H2disp}).

Now we reconstruct the dispersive Lax pair by an appropriate quantization the dispersionless Lax pair (\ref{Lax2_eq11}), (\ref{Lax2_eq12}). First, we ``quantise'' \cite{Zakharov} the terms in every equation (\ref{Lax2_eq11}), (\ref{Lax2_eq12}): $u_z$~is replaced by $\triangle_z u$; ${\rm e}^{S_z} - 1$ by $\triangle_z \psi$ due to the formal representation ${\rm e}^{\frac{\partial}{\partial z}} \approx 1 + \frac{\partial}{\partial z} + \cdots$, and, similarly $1 - {\rm e}^{-S_z}$ by $\triangle_{\bar z} \psi$.

In most cases, this procedure provides the necessary Lax pair. But in this case we do not obtain the Lax pair for (\ref{H2}) if we act in the same way. It was experimentally found that we should fit the second term in the r.h.s.\ of equations (\ref{Lax2_eq11}), (\ref{Lax2_eq12}) by the following way (i.e., we guess some part):
\begin{gather*}
 \psi_x = \frac{u_x - u^2 - 1}{\triangle_z u} \triangle_z \psi + P(u) \psi,	\\
 \psi_y = \frac{u_y - u^2 - 1}{\triangle_{\bar z} u} \triangle_{\bar z} \psi + Q(u) \psi.
\end{gather*}
The compatibility condition $\psi_{xy} = \psi_{yx}$ is straightforward to solve. Thus we find that
 equation~(\ref{H2}) possesses the Lax pair
\begin{gather*}
 \psi_x = \frac{u_x - u^2 - 1}{\triangle_z u} \triangle_z \psi + u \psi,\qquad
 \psi_y = \frac{u_y - u^2 - 1}{\triangle_{\bar z} u} \triangle_{\bar z} \psi + u \psi.
\end{gather*}
It finally proved Theorem~\ref{theorem2.1}. \end{proof}

\section{Higher symmetries of periodic closings}\label{section3}

Let us impose the periodic closure conditions $u_{n+2} = u_n$ to infinite lattice \eqref{eq1}. Then we obtain the following finite system:
\begin{equation}
u_{0,xy} = \frac{2}{u_0 - u_1} u_{0,x} u_{0,y}, \qquad u_{1,xy} = \frac{2}{u_1 - u_0} u_{1,x} u_{1,y}. \label{eq31}
\end{equation}
System \eqref{eq31} has the $x$-integral and the $y$-integral
\begin{equation} \label{Int1}
w = \frac{u_{0,y} u_{1,y}}{(u_0 - u_1)^2}, \qquad W = \frac{u_{0,x} u_{1,x}}{(u_0 - u_1)^2}.
\end{equation}
Lax pair for \eqref{eq31} has the form
\begin{equation} \label{Lax1}
\Psi_x = (A \lambda + B) \Psi, \qquad \Psi_y = \big(\tilde{A} \lambda^{-1}+ \tilde{B}\big) \Psi,
\end{equation}
where $\Psi = (\psi_1, \psi_0)^{\rm T}$ and
\begin{gather*}
A = \left(\begin{matrix}
0 & 0 \\
\dfrac{u_{1,x}}{u_0 - u_1} & 0
\end{matrix} \right), \qquad
B = \left(\begin{matrix}
\dfrac{u_{0,x}}{u_0 - u_1} & -\dfrac{u_{0,x}}{u_0 - u_1}\vspace{1mm}\\
0 & -\dfrac{u_{1,x}}{u_0 - u_1}
\end{matrix} \right),
\\
\tilde{A} = \left(\begin{matrix}
0 & 0 \\
0 & -\dfrac{u_{0,y}}{u_0 - u_1}
\end{matrix} \right), \qquad
\tilde{B} = \left(\begin{matrix}
-\dfrac{u_{0,y}}{u_0 - u_1} & 0\vspace{1mm}\\
-\dfrac{u_{1,y}}{u_1 - u_0} & -\dfrac{u_{1,y}}{u_1 - u_0}
\end{matrix} \right),
\end{gather*}
$\lambda$ is a spectral parameter.

The classical symmetry can be found directly from the consistency condition $(u_{i,xy})_{t_1} = (u_{i,t_1})_{xy}$:
\begin{gather*}
u_{0,t_1} = u_{0,x} F(W) + c_1 u^2_0 + c_2 u_0 + c_3,\\
u_{1,t_1} = u_{1,x} F(W) + c_1 u^2_1 + c_2 u_1 + c_3,
\end{gather*}
where $F$ is an arbitrary function depending on the $y$-integral $W$ defined by the second formula of \eqref{Int1}; $c_1$, $c_2$, $c_3$ are arbitrary constants. The classical symmetry in the another direction is simply found because the system is symmetric under the change of variables $x \leftrightarrow y$:
\begin{gather*}
 u_{0,t_2} = u_{0,y} G(w) + \tilde{c}_1 u^2_0 + \tilde{c}_2 u_0 + \tilde{c}_3,\\
 u_{1,t_2} = u_{1,y} G(w) + \tilde{c}_1 u^2_1 + \tilde{c}_2 u_1 + \tilde{c}_3.
\end{gather*}
Higher symmetry of the second order is sought in the following form:
\begin{gather*}
u_{i,\tau_1} = a_i(u_0,u_1,u_{0,x},u_{1,x})u_{0,xx}+b_i(u_0,u_1,u_{0,x},u_{1,x})u_{1,xx}+h_i(u_0,u_1,u_{0,x},u_{1,x}),
\end{gather*}
$i=1,2$, where $a_i$, $b_i$, $h_i$ are functions to be found. To find the higher symmetry we use Lax pair~\eqref{Lax1}. Let us consider the linear problem
\begin{equation} \label{Psi_t}
\Psi_{\tau_1} = \big(\alpha \lambda^2 + \beta \lambda + \gamma\big) \Psi,
\end{equation}
where $\alpha = (\alpha_{i,j})$, $\beta = (\beta_{i,j})$, $\gamma = (\gamma_{i,j})$, $i,j = 1, 2$ are matrices to be found. It is assumed that elements of the matrices depend on the variables $u_{0}$, $u_1$, $u_{0,x}$, $u_{1,x}$, $u_{0,xx}$, $u_{1,xx}$. The compatibility condition $(\Psi_x)_{\tau_1} = (\Psi_{\tau_1})_x$ for the systems
\begin{equation*}
\Psi_x = (A \lambda + B) \Psi, \qquad \Psi_{\tau_1} = \big(\alpha \lambda^2 + \beta \lambda + \gamma\big) \Psi,
\end{equation*}
results in the system of relations
\begin{gather*}
 A \alpha = \alpha A, \qquad A \beta + B \alpha = \alpha_x + \alpha B + \beta A,\\
 A_{\tau_1} + A \gamma + B \beta = \beta_x + \beta B + \gamma A, \qquad
B_{\tau_1} + B \gamma = \gamma_x + \gamma B.
\end{gather*}
A complete study of these equations leads to the following formulas:
\begin{gather*}
 u_{0,\tau_1} = H(W) u_{0,xx} + \frac{u^2_{0,x}}{(u_0 - u_1)^2} \Phi(W) u_{1,xx} + (u_0 - u_1) g(u_0,u_1,u_{0,x}, u_{1,x}) \\
\hphantom{u_{0,\tau_1} =}{} + (u_0 - u_1) (c_0 - c_1 u_1 - \frac{c_2}{2}) - (c_1 u^2_1 + c_2 u_1 + c_3),\\
 u_{1,\tau_1} = \frac{u_{1,x}}{u_{0,x}} H(W) u_{0,xx} + W \Phi(W) u_{1,xx} + \frac{(u_0 - u_1)u_{1,x}}{u_{0,x}} g(u_0,u_1,u_{0,x}, u_{1,x}) \\
\hphantom{u_{1,\tau_1} =}{} + \frac{(u_0 - u_1)u_{1,x}}{u_{0,x}} (c_0 + c_1 u_0 + \frac{c_2}{2}) - (c_1 u^2_1 + c_2 u_1 + c_3),
\end{gather*}
where $H$, $\Phi$, $g$ are arbitrary functions; $c_i$ are arbitrary constants. To define precisely obtained formulas we substitute them into the compatibility condition $(u_{i,xy})_{\tau_1} = (u_{i,\tau_1})_{xy}$. Thus, we finally found the higher symmetry of the second order:
\begin{gather} \label{u0t_1}
u_{0,\tau_1} = \left( u_{0,xx} + \frac{u_{0,x}}{u_{1,x}} u_{1,xx} -\frac{2 u_{0,x} (u_{0,x} - u_{1,x})}{(u_0 - u_1)} \right) F(W),
\\
	\label{u1t_1}
u_{1,\tau_1} = \left( u_{1,xx} + \frac{u_{1,x}}{u_{0,x}} u_{0,xx} - \frac{2 u_{1,x} (u_{0,x} - u_{1,x})}{(u_0 - u_1)} \right) F(W),
\end{gather}
where $F$ is an arbitrary function; $W$ is the $y$-integral defined by the second formula of \eqref{Int1}. Also we finally found matrices $\alpha$, $\beta$, $\gamma$ involved in \eqref{Psi_t}:
\begin{gather*}
\alpha = \left(\begin{matrix}
\alpha_{11} & 0\\
0 & \alpha_{11}
\end{matrix}\right), \qquad
\beta = \left( \begin{matrix}
\beta_{11} & 0\\
\beta_{21}(\overline{u}, \overline{u}_x, \overline{u}_{xx}) & \beta_{11}
\end{matrix}\right),
\\
\gamma = \left(\begin{matrix}
\gamma_{11}(\overline{u}, \overline{u}_x, \overline{u}_{xx}) & \gamma_{12}(\overline{u}, \overline{u}_x, \overline{u}_{xx})\\
0 & \gamma_{22}(\overline{u}, \overline{u}_x, \overline{u}_{xx})
\end{matrix}\right),
\end{gather*}
where
\begin{gather*}
\beta_{21}(\overline{u}, \overline{u}_x, \overline{u}_{xx}) =\left( \frac{u_{1,x}}{u_{0,x} (u_0 - u_1)} u_{0,xx} + \frac{1}{(u_0 - u_1)} u_{1,xx} -
 \frac{2 u_{1,x} (u_{0,x}-u_{1,x})}{(u_0 - u_1)^2}\right) F(W),
\\
\gamma_{11}(\overline{u}, \overline{u}_x, \overline{u}_{xx}) = \left( \frac{1}{(u_0 - u_1)} u_{0,xx} + \frac{u_{0,x}}{u_{1,x}(u_0 - u_1)} u_{1,xx}
-\frac{2 u_{0,x}(u_{0,x} - u_{1,x})}{(u_0 - u_1)^2} \right) F(W),
\\
\gamma_{12}(\overline{u}, \overline{u}_x, \overline{u}_{xx}) = \left( -\frac{1}{(u_0 - u_1)} u_{0,xx} - \frac{u_{0,x}}{u_{1,x} (u_0 - u_1)} u_{1,xx}
+ \frac{u_{0,x} (u_{0,x} - u_{1,x})}{(u_0 - u_1)^2} \right) F(W),
\end{gather*}
$\alpha_{11}$, $\beta_{11}$ are arbitrary constants. Thus it is seen that definitive answer is given by formulas~\eqref{u0t_1},~\eqref{u1t_1} and
\begin{equation*}
\Psi_{\tau_1} = (\beta \lambda + \gamma) \Psi, \qquad \beta = \left( \begin{matrix}
0 & 0\\
\beta_{21} & 0
\end{matrix}\right), \qquad \gamma = \left(\begin{matrix}
\gamma_{11} & \gamma_{12}\\
0 & \gamma_{22}
\end{matrix}\right),
\end{equation*}
where $\beta_{21}$, $\gamma_{ij}$ have been described just above.

\begin{Remark}
The symmetry given by \eqref{u0t_1}, \eqref{u1t_1} can be written as\footnote{I am grateful to the anonymous referee for this constructive comment.}
\begin{equation*}
u_{0,\tau_1}=u_{0,x} F(W) \frac{W_x}{W}, \qquad
u_{1,\tau_1}=u_{1,x} F(W) \frac{W_x}{W}.
\end{equation*}
Therefore this is actually the classical symmetry in disguise.
\end{Remark}

Let us consider chain \eqref{eq3}. We impose the periodic closure conditions $u_{n+2} = u_n$ to infinite chain \eqref{eq3} and obtain the following finite system:
\begin{gather}
 u_{0,xy} = \frac{2}{u_0 - u_1} \big(u_{0,x} - u^2_0 - 1\big)\big(u_{0,y} - u^2_0 - 1\big) + 2 u_0 \big(u_{0,x} + u_{0,y} - u^2_0 - 1\big), \nonumber\\
 u_{1,xy} = \frac{2}{u_1 - u_0} \big(u_{1,x} - u^2_1 - 1\big)\big(u_{1,y} - u^2_1 - 1\big) + 2 u_1 \big(u_{1,x} + u_{1,y} - u^2_1 - 1\big). 	\label{fin_sys2}
\end{gather}
This system possesses the $y$-integral and the $x$-integral
\begin{equation} \label{Int2}
P = \frac{\big(u_{0,x} - u^2_0 - 1\big)\big(u_{1,x} - u^2_1 - 1\big)}{(u_0 - u_1)^2}, \qquad J = \frac{\big(u_{0,y} - u^2_0 - 1\big)\big(u_{1,y} - u^2_1 - 1\big)}{\big(u_0 - u_1\big)^2}.
\end{equation}
System \eqref{fin_sys2} is the compatibility condition for the Lax pair
\begin{equation}	\label{Lax2}
\Phi_x = (S \lambda + T) \Phi, \qquad \Phi_y = \big(\tilde{S} \lambda^{-1} + \tilde{T}\big)\Phi,
\end{equation}
where $\Phi = (\phi_0, \phi_1)^{\rm T}$,
\begin{gather*}
S = \left(\begin{matrix}
0 & 0\\
\dfrac{u_{1,x} - u^2_1 - 1}{u_0 - u_1} & 0
\end{matrix}\right), \qquad
T = \left(\begin{matrix}
- \dfrac{u_{0,x} - u^2_0 - 1}{u_1 - u_0} + u_0 & \dfrac{u_{0,x} - u^2_0 - 1}{u_1 - u_0} \vspace{1mm}\\
0 & -\dfrac{u_{1,x} - u^2_1 - 1}{u_0 - u_1} + u_1
\end{matrix}\right),
\\
\tilde{S} = \left( \begin{matrix}
0 & -\dfrac{u_{0,y} - u^2_0 - 1}{u_0 - u_1}\\
0 & 0
\end{matrix} \right), \qquad
\tilde{T} = \left( \begin{matrix}
\dfrac{u_{0,y} - u^2_0 - 1}{u_0 - u_1} + u_0 & 0\vspace{1mm}\\
-\dfrac{u_{1,y} - u^2_1 - 1}{u_1 - u_0} & \dfrac{u_{1,y} - u^2_1 - 1}{u_1 - u_0} + u_1
\end{matrix} \right).
\end{gather*}
To find the higher symmetry it is sufficient (as we have just seen) to consider the system
\begin{equation} \label{Psi_tau2}
\Phi_{\tau_2} = \big(\tilde{\beta} \lambda + \tilde{\gamma}\big) \Phi,
\end{equation}
compatible with the first equation of \eqref{Lax2}.
In this way we obtained the higher symmetry of system~\eqref{fin_sys2}:
\begin{gather}	\label{sym21}
u_{0,\tau_2} = \left( u_{0,xx} + \frac{u_{0,x} - u^2_0 - 1}{u_{1,x} - u^2_1 - 1} u_{1,xx}
+ \frac{ 2 \varphi(u_0, u_1, u_{0,x}, u_{1,x})}{(u_{1,x} - u^2_1 - 1) (u_0 - u_1)} \right) F(P) ,
\\ \label{sym22}
u_{1,\tau_2} = \left(\frac{ u_{1,x} - u^2_1 - 1}{u_{0,x} - u^2_0 - 1} u_{0,xx} + u_{1,xx}
+ \frac{2 \varphi(u_0, u_1, u_{0,x}, u_{1,x}) }{(u_{0,x} - u^2_0 - 1)(u_0 - u_1)} \right)F(P),
\end{gather}
where $P$ is the $y$-integral given by the first formula of \eqref{Int2},
\begin{gather}
\varphi(u_0,u_1,u_{0,x},u_{1,x}) = u_{0,x} u_{1,x} (u_{1,x} - u_{0,x}) + u^2_{0,x} \big(1 + u^2_1\big) - u^2_{1,x} \big(1 + u^2_0\big) \nonumber \\
\hphantom{\varphi(u_0,u_1,u_{0,x},u_{1,x}) =}{} - u_{0,x} \big(1 + u^2_1 + u_0 u_1 + u_0 u^3_1\big) + u_{1,x} \big(1 + u^2_0 + u_0 u_1 + u^3_0 u_1\big).\label{phi2}
\end{gather}
Matrices $\tilde{\beta}$, $\tilde{\gamma}$ (see \eqref{Psi_tau2}) are defined by the following formulas:
\begin{gather*}
\tilde{\beta} = \left( \begin{matrix}
0 & 0\\
\tilde{\beta}_{21}(\bar{u}, \bar{u}_x, \bar{u}_{xx}) & 0
\end{matrix}
\right), \qquad
\tilde{\gamma} = \left(\begin{matrix}
\tilde{\gamma}_{11}(\bar{u}, \bar{u}_x, \bar{u}_{xx}) & \tilde{\gamma}_{12}(\bar{u}, \bar{u}_x, \bar{u}_{xx}) \\
0 & \tilde{\gamma}_{22}(\bar{u}, \bar{u}_x, \bar{u}_{xx})
\end{matrix}\right),
\end{gather*}
where
\begin{gather*}
\tilde{\beta}_{21}(\bar{u}, \bar{u}_x, \bar{u}_{xx}) = \left( \frac{u_{1,x} - u^2_1 - 1}{(u_0 - u_1)\big(u_{0,x} - u^2_0 - 1\big)} u_{0,xx} + \frac{u_{1,xx}}{u_0 - u_1} \right.\\
\left. \hphantom{\tilde{\beta}_{21}(\bar{u}, \bar{u}_x, \bar{u}_{xx}) =}{} + \frac{2 \varphi(u_0, u_1, u_{0,x}, u_{1,x}) }{\big(u_{0,x} - u^2_0 - 1\big) (u_0 - u_1)^2}\right) F(P),
\\
\tilde{\gamma}_{11}(\bar{u}, \bar{u}_x, \bar{u}_{xx}) = \left(\frac{ u_{0,xx}}{u_0 - u_1} + \frac{\big(u_{0,x} - u^2_0 - 1\big)u_{1,xx}}{\big(u_{1,x} - u^2_1 - 1\big)(u_0 - u_1)} \right.\\
\left. \hphantom{\tilde{\gamma}_{11}(\bar{u}, \bar{u}_x, \bar{u}_{xx}) =}{}
+ \frac{2\varphi(u_0, u_1, u_{0,x}, u_{1,x})}{(u_0 - u_1)^2\big(u_{1,x} - u^2_1 - 1\big)}\right) F(P) ,
\\
\tilde{\gamma}_{12}(\bar{u}, \bar{u}_x, \bar{u}_{xx}) = \left(-\frac{u_{0,xx}}{u_0 - u_1} - \frac{\big(u_{0,x} - u^2_0 - 1\big)}{\big(u_{1,x} - u^2_1 - 1\big)(u_0 - u_1)}u_{1,xx} \right.\\
\left. \hphantom{\tilde{\gamma}_{12}(\bar{u}, \bar{u}_x, \bar{u}_{xx}) =}{} - \frac{2\varphi(u_0, u_1, u_{0,x}, u_{1,x}) }{(u_0 - u_1)^2 \big(u_{1,x} - u^2_1 - 1\big)}\right)F(P),
\\
\tilde{\gamma}_{22}(\bar{u}, \bar{u}_x, \bar{u}_{xx}) = \left( -\frac{\big(u_{1,x} - u^2_1 - 1\big)}{(u_0 - u_1)\big(u_{0,x} - u^2_0 - 1\big)} u_{0,xx} - \frac{ u_{1,xx}}{u_0 - u_1} \right.\\
\left. \hphantom{\tilde{\gamma}_{22}(\bar{u}, \bar{u}_x, \bar{u}_{xx}) =}{}
- \frac{2 \varphi(u_0, u_1, u_{0,x}, u_{1,x})}{\big(u_{0,x} - u^2_0 - 1\big)(u_0 - u_1)^2} \right) F(P),
\end{gather*}
$\varphi(u_0, u_1, u_{0,x}, u_{1,x})$ is defined by \eqref{phi2}.

\begin{Remark}
The symmetry given by \eqref{sym21}, \eqref{sym22} can be written as
\begin{equation*}
u_{0,\tau_2}=\big(u_{0,x} - u^2_0 - 1\big) F(P) \frac{P_x}{P}, \qquad
u_{1,\tau_2}=\big(u_{1,x} - u^2_1 - 1\big) F(P) \frac{P_x}{P}.
\end{equation*}
Therefore this is actually the classical symmetry in disguise.
\end{Remark}

Note, that periodic closing obtained by the conditions $u_{n+3} = u_n$ imposing on infinite chain~\eqref{eq3} leads to the system
\begin{gather*}
u_{0,xy} = \left( \frac{1}{u_0 - u_2} - \frac{1}{u_1 - u_0} \right) \big(u_{0,x} - u^2_0 - 1\big)\big(u_{0,y} - u^2_0 - 1\big)+ 2u_0\big(u_{0,x} + u_{0,y} - u^2_0 - 1\big),\\
u_{1,xy} = \left( \frac{1}{u_1 - u_0} - \frac{1}{u_2 - u_1} \right) \big(u_{1,x} - u^2_1 - 1\big)\big(u_{1,y} - u^2_1 - 1\big)+ 2u_1\big(u_{1,x} + u_{1,y} - u^2_1 - 1\big),\\
u_{2,xy} = \left( \frac{1}{u_2 - u_1} - \frac{1}{u_0 - u_2} \right) \big(u_{2,x} - u^2_2 - 1\big)\big(u_{2,y} - u^2_2 - 1\big)+ 2u_2\big(u_{2,x} + u_{2,y} - u^2_2 - 1\big).
\end{gather*}
This system has $y$-integral and $x$-integral
\begin{gather*}
 W = \frac{\big(u_{0,x} - u^2_0 - 1\big)\big(u_{1,x} - u^2_1 - 1\big)\big(u_{2,x} - u^2_2 - 1\big)}{(u_2 - u_1)(u_0 - u_1)(u_0 - u_2)}, \\
 w = \frac{\big(u_{0,y} - u^2_0 - 1\big)\big(u_{1,y} - u^2_1 - 1\big)\big(u_{2,y} - u^2_2 - 1\big)}{(u_2 - u_1)(u_0 - u_1)(u_0 - u_2)}.
\end{gather*}
Lax pair has the following form:
\begin{equation*}
\Psi_x = (A \lambda + B)\Psi, \qquad \Psi_y = \big(\tilde{A} \lambda^{-1} + \tilde{B}\big) \Psi,
\end{equation*}
where $\Psi = (\psi_0, \psi_1, \psi_2)^{\rm T}$,
\begin{gather*}
A = \left(\begin{matrix}
0 & 0 & 0\\
0 & 0 & 0\\
\dfrac{u_{2,x} - u^2_0 - 1}{u_1 - u_0} & 0 & 0
\end{matrix}\right),\\
B = \left(\begin{matrix}
-\dfrac{u_{0,x} - u^2_0 - 1}{u_1 - u_0} + u_0 & \dfrac{u_{0,x} - u^2_0 - 1}{u_1 - u_0} & 0\vspace{1mm}\\
0 & -\dfrac{u_{1,x} - u^2_1 - 1}{u_2 - u_1} + u_1 & \dfrac{u_{1,x} - u^2_1 - 1}{u_2 - u_1}\vspace{1mm}\\
0 & 0 & -\dfrac{u_{2,x} - u^2_2 - 1}{u_0 - u_2} + u_2
\end{matrix}\right),
\\
\tilde{A} = \left(\begin{matrix}
0 & 0 & - \dfrac{u_{0,y} - u^2_0 - 1}{u_0 - u_2}\\
0 & 0 & 0\\
0 & 0 & 0
\end{matrix} \right),\\
\tilde{B} = \left( \begin{matrix}
\dfrac{u_{0,y} - u^2_0 - 1}{u_0 - u_2} + u_0 & 0 & 0\vspace{1mm}\\
-\dfrac{u_{1,y} - u^2_1 - 1}{u_1 - u_0} & \dfrac{u_{1,y} - u^2_1 - 1}{u_1 - u_0} + u_1 & 0\vspace{1mm}\\
0 & -\dfrac{u_{2,y} - u^2_2 - 1}{u_2 - u_1} & \dfrac{u_{2,y} - u^2_2 - 1}{u_2 - u_1} + u_2
\end{matrix} \right).
\end{gather*}

\section{Conclusion}

The problem of classification multidimensional equations is actively studied by many authors, using different algebraic and geometry approaches \cite{B4, B19, B17, B2, B1, B3,FerKhus1,B15,B11}. We note that the classification algorithm for integrable two-dimensional lattices proposed in our previous papers does not provide any algorithm for constructing the Lax pair.

It is known that finite systems obtained from infinite integrable chains by degenerate boundary conditions imposing at the two points of the form $u_{n+k} = c_1$, $u_{n+s} = c_2$ (where $c_1$, $c_2$ are constants) are integrable in the sense of Darboux (they have complete set of integrals in both characteristic directions, i.e., the number of independent integrals is equal to the order of the system). We study finite systems obtained from infinite chains~\eqref{eq1}, \eqref{eq3} by periodic closure conditions. It is interesting fact that each of these systems also has one $x$-integral and one $y$-integral. We obtained that symmetries of these systems depend on integrals. It is known that Darboux integrable systems possesses symmetries which depend on integrals \cite{SS2008,ZhSS95}. Symmetries of systems with incomplete sets of integrals might depend on these integrals \cite{Kiselev, LSmSh}.
In a~discrete version, this fact is discussed in paper~\cite{Xenitidis}. In papers \cite{St1, St2} an algorithm is proposed which allows one to construct higher symmetries of arbitrary order for some special classes of hyperbolic systems possessing the integrals.

\subsection*{Acknowledgements}

The author gratefully thanks I.T.~Habibulin for assignment the problem and useful discussions, E.V.~Ferapontov for explaining the method of the construction of Lax pairs and S.Ya.~Startsev for valuable comments. The author gratefully thanks anonymous referees for a contribution to improve the paper.

\pdfbookmark[1]{References}{ref}
\LastPageEnding


\begin{thebibliography}{99}
\footnotesize\itemsep=0pt

\bibitem{B4}
Bogdanov L.V., Dunajski--{T}od equation and reductions of the generalized
 dispersionless 2{DTL} hierarchy, \href{https://doi.org/10.1016/j.physleta.2012.10.012}{\textit{Phys. Lett.~A}} \textbf{376} (2012),
 2894--2898, \href{https://arxiv.org/abs/1204.3780}{arXiv:1204.3780}.

\bibitem{B19}
Bogdanov L.V., Konopelchenko B.G., On dispersionless {BKP} hierarchy and its
 reductions, \href{https://doi.org/10.2991/jnmp.2005.12.s1.6}{\textit{J.~Nonlinear Math. Phys.}} \textbf{12} (2005), suppl.~1,
 64--73, \href{https://arxiv.org/abs/nlin.SI/0411046}{arXiv:nlin.SI/0411046}.

\bibitem{B17}
Calderbank D.M.J., Kruglikov B., Integrability via geometry: dispersionless
 differential equations in three and four dimensions, \href{https://doi.org/10.1007/s00220-020-03913-y}{\textit{Comm. Math.
 Phys.}} \textbf{382} (2021), 1811--1841, \href{https://arxiv.org/abs/1612.02753}{arXiv:1612.02753}.

\bibitem{St1}
Demskoi D.K., Startsev S.Ya., On construction of symmetries from integrals of
 hyperbolic partial differential systems, \href{https://doi.org/10.1007/s10958-006-0230-7}{\textit{J.~Math. Phys.}} \textbf{136}
 (2006), 4378--4384.

\bibitem{B2}
Doubrov B., Ferapontov E.V., Kruglikov B., Novikov V.S., On integrability in
 {G}rassmann geometries: integrable systems associated with fourfolds in
 {${\bf Gr}(3,5)$}, \href{https://doi.org/10.1112/plms.12114}{\textit{Proc. Lond. Math. Soc.}} \textbf{116} (2018),
 1269--1300, \href{https://arxiv.org/abs/1503.02274}{arXiv:1503.02274}.

\bibitem{B1}
Doubrov B., Ferapontov E.V., Kruglikov B., Novikov V.S., Integrable systems in
 four dimensions associated with six-folds in {${\rm Gr}(4,6)$}, \href{https://doi.org/10.1093/imrn/rnx308}{\textit{Int.
 Math. Res. Not.}} \textbf{2019} (2019), 6585--6613, \href{https://arxiv.org/abs/1705.06999}{arXiv:1705.06999}.

\bibitem{Fer-TMF}
Ferapontov E.V., Laplace transforms of hydrodynamic-type systems in {R}iemann
 invariants, \href{https://doi.org/10.1007/BF02630370}{\textit{Theoret. and Math. Phys.}} \textbf{110} (1997), 68--77,
 \href{https://arxiv.org/abs/solv-int/9705017}{arXiv:solv-int/9705017}.

\bibitem{FHKN}
Ferapontov E.V., Habibullin I.T., Kuznetsova M.N., Novikov V.S., On a class of
 2{D} integrable lattice equations, \href{https://doi.org/10.1063/5.0013697}{\textit{J.~Math. Phys.}} \textbf{61}
 (2020), 073505, 15~pages, \href{https://arxiv.org/abs/2005.06738}{arXiv:2005.06738}.

\bibitem{B3}
Ferapontov E.V., Hadjikos L., Khusnutdinova K.R., Integrable equations of the
 dispersionless {H}irota type and hypersurfaces in the {L}agrangian
 {G}rassmannian, \href{https://doi.org/10.1093/imrn/rnp134}{\textit{Int. Math. Res. Not.}} \textbf{2010} (2010), 496--535,
 \href{https://arxiv.org/abs/0705.1774}{arXiv:0705.1774}.

\bibitem{FerKhus1}
Ferapontov E.V., Khusnutdinova K.R., Hydrodynamic reductions of
 multidimensional dispersionless {PDE}s: the test for integrability,
 \href{https://doi.org/10.1063/1.1738951}{\textit{J.~Math. Phys.}} \textbf{45} (2004), 2365--2377,
 \href{https://arxiv.org/abs/nlin.SI/0312015}{arXiv:nlin.SI/0312015}.

\bibitem{B15}
Ferapontov E.V., Kruglikov B.S., Dispersionless integrable systems in 3{D} and
 {E}instein--{W}eyl geometry, \href{https://doi.org/10.4310/jdg/1405447805}{\textit{J.~Differential Geom.}} \textbf{97}
 (2014), 215--254, \href{https://arxiv.org/abs/1208.2728}{arXiv:1208.2728}.

\bibitem{FerapNovR}
Ferapontov E.V., Novikov V.S., Roustemoglou I., Towards the classification of
 integrable differential-difference equations in {$2+1$} dimensions,
 \href{https://doi.org/10.1088/1751-8113/46/24/245207}{\textit{J.~Phys.~A: Math. Theor.}} \textbf{46} (2013), 245207, 13~pages,
 \href{https://arxiv.org/abs/21303.3430}{arXiv:21303.3430}.

\bibitem{H2013}
Habibullin I.T., Characteristic {L}ie rings, finitely-generated modules and
 integrability conditions for (2+1)-dimensional lattices, \href{https://doi.org/10.1088/0031-8949/87/06/065005}{\textit{Phys. Scr.}}
 \textbf{87} (2013), 065005, 5~pages, \href{https://arxiv.org/abs/1208.5302}{arXiv:1208.5302}.

\bibitem{HP1}
Habibullin I.T., Kuznetsova M.N., A classification algorithm for integrable
 two-dimensional lattices via {L}ie--{R}inehart algebras, \href{https://doi.org/10.1134/S0040577920040121}{\textit{Theoret. and
 Math. Phys.}} \textbf{203} (2020), 569--581, \href{https://arxiv.org/abs/1907.12269}{arXiv:1907.12269}.

\bibitem{HabKuzS}
Habibullin I.T., Kuznetsova M.N., Sakieva A.U., Integrability conditions for
 two-dimensional {T}oda-like equations, \href{https://doi.org/10.1088/1751-8121/abac98}{\textit{J.~Phys.~A: Math. Theor.}}
 \textbf{53} (2020), 395203, 25~pages, \href{https://arxiv.org/abs/2005.09712}{arXiv:2005.09712}.

\bibitem{HP2}
Habibullin I., Poptsova M., Classification of a subclass of two-dimensional
 lattices via characteristic {L}ie rings, \href{https://doi.org/10.3842/SIGMA.2017.073}{\textit{SIGMA}} \textbf{13} (2017),
 073, 26~pages, \href{https://arxiv.org/abs/1703.09963}{arXiv:1703.09963}.

\bibitem{Kiselev}
Kiselev A.V., van~de Leur J.W., Symmetry algebras of {L}agrangian
 {L}iouville-type systems, \href{https://doi.org/10.1007/s11232-010-0011-9}{\textit{Theoret. and Math. Phys.}} \textbf{162}
 (2010), 149--–162, \href{https://arxiv.org/abs/0902.3624}{arXiv:0902.3624}.

\bibitem{Kuzn}
Kuznetsova M.N., Classification of a subclass of quasilinear two-dimensional
 lattices by means of characteristic algebras, \href{https://doi.org/10.13108/2019-11-3-109}{\textit{Ufa Math.~J.}}
 \textbf{11} (2019), 109--131.

\bibitem{LSmSh}
Leznov A.N., Smirnov V.G., Shabat A.B., The group of internal symmetries and
 the conditions of integrability of two-dimensional dynamical systems,
 \href{https://doi.org/10.1007/BF01029257}{\textit{Theoret. and Math. Phys.}} \textbf{51} (1982), 322--330.

\bibitem{Million}
Millionshchikov D., Lie algebras of slow growth and {K}lein--{G}ordon {PDE},
 \href{https://doi.org/10.1007/s10468-018-9794-4}{\textit{Algebr. Represent. Theory}} \textbf{21} (2018), 1037--1069,
 \href{https://arxiv.org/abs/1711.03706}{arXiv:1711.03706}.

\bibitem{B11}
Pavlov M.V., Classifying integrable Egoroff hydrodynamic chains,
 \href{https://doi.org/10.1023/B:TAMP.0000010632.20218.62}{\textit{Theoret. and Math. Phys.}} \textbf{138} (2004), 45--58.

\bibitem{Rinehart}
Rinehart G.S., Differential forms on general commutative algebras,
 \href{https://doi.org/10.2307/1993603}{\textit{Trans. Amer. Math. Soc.}} \textbf{108} (1963), 195--222.

\bibitem{ShY}
Shabat A.B., Yamilov R.I., To a transformation theory of two-dimensional
 integrable systems, \href{https://doi.org/10.1016/S0375-9601(96)00922-X}{\textit{Phys. Lett.~A}} \textbf{227} (1997), 15--23.

\bibitem{SS2008}
Startsev S.Ya., On differential substitutions of the {M}iura transformation
 type, \href{https://doi.org/10.1007/BF02557141}{\textit{Theoret. and Math. Phys.}} \textbf{116} (1998), 1001--1010.

\bibitem{St2}
Startsev S.Ya., On the variational integrating matrix for hyperbolic systems,
 \href{https://doi.org/10.1007/s10958-008-9034-2}{\textit{J.~Math. Phys.}} \textbf{151} (2008), 3245--3253.

\bibitem{Xenitidis}
Xenitidis P., Determining the symmetries of difference equations,
 \href{https://doi.org/10.1098/rspa.2018.0340}{\textit{Proc.~R. Soc. Lond.~A}} \textbf{474} (2018), 20180340, 20~pages.

\bibitem{Zakharov}
Zakharov V.E., Dispersionless limit of integrable systems in {$2+1$}
 dimensions, in Singular Limits of Dispersive Waves ({L}yon, 1991),
 \textit{NATO Adv. Sci. Inst.~Ser. B: Phys.}, Vol.~320, \href{https://doi.org/10.1007/978-1-4615-2474-8_13}{Springer}, Boston, MA,
 1994, 165--174.

\bibitem{ZhiberK}
Zhiber A.V., Kostrigina O.S., Exactly integrable models of wave processes,
 \textit{Vestnik USATU} \textbf{9} (2007), no.~7, 83--89.

\bibitem{ZMHSbook}
Zhiber A.V., Murtazina R.D., Habibullin I.T., Shabat A.B., Characteristic {L}ie
 rings and nonlinear integrable equations, Institute of Computer Science,
 Moscow -- Izhevsk, 2012.

\bibitem{ZhSS95}
Zhiber A.V., Sokolov V.V., Startsev S.Ya., On nonlinear {D}arboux-integrable
 hyperbolic equations, \textit{Dokl. Akad. Nauk} \textbf{343} (1995),
 746--748.

\end{thebibliography}
\end{document}